\documentclass{ifacconf}

\usepackage{amsmath,amsfonts}
\usepackage{algorithmicx}
\usepackage{algpseudocode}
\usepackage{algorithm}
\usepackage{array}
\usepackage[caption=false,font=normalsize,labelfont=sf,textfont=sf]{subfig}
\usepackage{textcomp}
\usepackage{stfloats}
\usepackage{url}
\usepackage{verbatim}
\usepackage{graphicx}
\usepackage{color}
\usepackage{booktabs}
\usepackage{natbib}

\theoremstyle{definition}

\theoremstyle{plain}
\newtheorem{theorem}{Theorem}

\newenvironment{proof}{\par\noindent\textit{Proof.}\ }{\hfill$\square$\par}

\begin{document}
\begin{frontmatter}

\title{Online Reinforcement Learning for Safe Gain Scheduling in Nonlinear Quadrotor Control\thanksref{footnoteinfo}} 

\thanks[footnoteinfo]{Sponsor and financial support acknowledgment
goes here. Paper titles should be written in uppercase and lowercase
letters, not all uppercase.}

\author[AME]{Muhammad Junayed Hasan Zahed}
\author[ECE]{Chieh Tsai}
\author[ECE]{Salim Hariri}
\author[AME]{Hossein Rastgoftar}

\address[AME]{Department of Aerospace and Mechanical Engineering, University of Arizona, Tucson, AZ, USA
(e-mails: \{mjhz, hrastgoftar\}@arizona.edu)}

\address[ECE]{Department of Electrical and Computer Engineering, University of Arizona, Tucson, AZ, USA
(e-mails: \{vegetableclean, hariri\}@arizona.edu)}

\begin{abstract}
This paper presents an online reinforcement-learning framework for safe gain scheduling of a nonlinear quadcopter controller. Rather than learning thrust and torque commands directly, the proposed method selects gain vectors online from a finite library of pre-certified stabilizing controllers, thereby preserving the structure of the underlying snap-based control law. Safety is enforced by restricting the policy to admissible gains that maintain forward invariance of a prescribed safe state set, while dwell-time constraints prevent excessively fast switching. To reduce the action-space dimension, translational gains are shared across spatial axes by exploiting the isotropic structure of the translational dynamics, whereas yaw gains are scheduled independently. A deep Q-network learns to adjust feedback authority according to the current flight condition, using aggressive gains during large transients and milder gains near hover. High-fidelity nonlinear simulations demonstrate accurate trajectory tracking, bounded attitude motion, reduced control effort near convergence, and stable hover regulation under online safe gain scheduling.
\end{abstract}

\begin{keyword} Reinforcement Learning, Deep Q-Network (DQN), Gain Scheduling, Forward Invariance, Safe Online Learning, Adaptive Control, Nonlinear Control, Quadcopter Dynamics, UAV Trajectory Tracking
\end{keyword}

\end{frontmatter}

\section{Introduction}

Autonomous quadrotor stabilization and hover regulation have received growing attention as unmanned aerial vehicles (UAVs) are increasingly deployed in inspection, surveillance, transportation, and infrastructure monitoring applications \cite{mahony2012multirotor, adibfar2023review, jahani2025exploring}. Model-based nonlinear control methods, particularly flatness-based and snap-based architectures, have demonstrated strong capability for dynamically feasible tracking and reliable hover performance while preserving controller structure and interpretability \cite{lee2010geometric, mellinger2011minimum, el2023quadcopter}. However, many high-performance controllers still rely on feedback gains selected offline, which can become conservative or suboptimal across changing transient operating regimes \cite{ccopur2025tuning, hassani2024performance}. Reinforcement learning offers the possibility of online adaptation, but directly learning low-level control actions often sacrifices physical consistency, interpretability, and safety properties that are critical in safety-sensitive flight systems \cite{hsu2024reforma, dionigi2024power, xing2024bootstrapping}. These limitations motivate adaptation at the level of structured controller tuning rather than raw action generation. Accordingly, this paper develops a safe online gain-scheduling framework for snap-based quadrotor control, where reinforcement learning selects among stabilizing gain configurations in real time to improve control performance across changing operating regimes while preserving the structure, interpretability, and physical consistency of the underlying controller.

\subsection{Related Work}
Model-based quadrotor control has been extensively studied for hover stabilization and trajectory tracking because quadrotor dynamics are nonlinear, coupled, and underactuated. Flatness-based and snap-based formulations enable dynamically feasible trajectory generation and structured feedback design \cite{mellinger2011minimum, sun2022comparative, rigatos2024flatness}, while geometric control methods provide a consistent treatment of translational and attitude dynamics with strong tracking performance \cite{lee2010geometric}. These approaches are particularly attractive in safety-critical aerial systems because they preserve physical consistency and yield interpretable closed-loop behavior \cite{khalid2023control}. In practical implementations, however, controller gains are often selected or scheduled offline for nominal flight conditions, specific operating regimes, or anticipated disturbances \cite{adiguzel2024gain, khalid2023control}. Although such designs can perform well within their intended operating envelopes, they generally offer limited adaptability when flight conditions change online, especially across aggressive transients and near-hover regulation.
In contrast to the above model-based approaches, learning-based methods aim to reduce manual tuning and compensate for unmodeled dynamics in quadrotor flight. Many such methods retain a stabilizing feedback loop while augmenting it with learned feedforward or adaptive components. Prior work has used deep neural networks for trajectory tracking \cite{li2017deep}, inversion-based learning for aggressive multirotor control \cite{spitzer2019inverting}, and reinforcement learning for agile or low-level UAV flight control \cite{koch2019reinforcement}. However, most RL-based flight controllers still learn continuous low-level control actions or generic policy representations, which can reduce interpretability and make it difficult to preserve the structure of established model-based controllers.

Safe reinforcement learning addresses these issues through several representative directions. Barrier-function-based approaches enforce safety through filtering or shielding \cite{zhang2021model, emam2022safe}, Lyapunov-based approaches incorporate stability-oriented certificates during learning \cite{chow2018lyapunov}, and constrained policy optimization introduces explicit safety constraints into policy updates \cite{achiam2017constrained}. While these methods provide important mechanisms for safe policy improvement, they are typically developed for generic learned policies rather than structured controller tuning. In contrast, our approach uses reinforcement learning to perform online gain scheduling within a snap-based control architecture, so that adaptation occurs at the level of structured feedback gains rather than raw control generation. This preserves controller structure and interpretability while enabling real-time adjustment of feedback authority across changing flight regimes.

\subsection{Contributions}
This paper develops a safe and adaptive control framework for quadrotor hover regulation that integrates model-based snap control with online learning-based gain scheduling. The main contributions are as follows.

\begin{itemize}
\item We formulate quadrotor hover regulation as a two-level problem consisting of: (i) construction of a gain-parameterized feedback law that guarantees forward invariance of an admissible state set, and (ii) online selection of gains from an admissible discrete gain library to optimize closed-loop performance.

\item We develop a forward-invariant control architecture in which safety is enforced by design through an admissible gain set $\mathcal{K}_{\mathrm{adm}}$. This separates safety from performance optimization and ensures that the learning policy operates only over controllers that preserve safe closed-loop evolution.

\item We construct an online learning framework for gain scheduling in which the policy selects feedback gains from a finite subset $\bar{\mathcal{K}}_{\mathrm{adm}} \subset \mathcal{K}_{\mathrm{adm}}$, rather than directly synthesizing control inputs. This preserves the structure, interpretability, and physical consistency of the model-based controller.

\item We provide a unified architecture that connects reference generation, error feedback, nonlinear inversion, forward invariance guarantees, and discrete-time learning, enabling safe online adaptation during transient maneuvers toward hover.
\end{itemize}
\subsection{Outline}
This paper is organized as follows. The problem statement is presented in Section~\ref{Problem Statement}. A feedback control law that guarantees forward invariance of the quadrotor dynamics is developed in Section~\ref{Quadcopter Model}. The safe online learning framework for gain scheduling is introduced in Section~\ref{Safe Learning}. Simulation results are presented in Section~\ref{Results}, followed by concluding remarks in Section~\ref{Conclusion}.
\section{Problem Statement}\label{Problem Statement}
We study the problem of safe, adaptive trajectory tracking for a quadrotor and seek to develop a control architecture that integrates model-based feedback with online learning. The objective is to drive the vehicle to a desired hover condition while guaranteeing safe closed-loop operation.

Using the quadrotor model in \cite{el2023quadcopter, Rastgoftar2021TCNS}, the dynamics are given by
\begin{equation}\label{orginaldyn}
\begin{cases}
\dot{\mathbf{r}} = \mathbf{v}, \\[4pt]
\dot{\mathbf{v}}
=
-\,g\hat{\mathbf{e}}_3
+\dfrac{mg+T}{m}\mathbf{R}(\boldsymbol{\eta})\hat{\mathbf{e}}_3, \\[8pt]
\dot{\boldsymbol{\eta}} = \mathbf{E}(\phi,\theta)\boldsymbol{\omega}, \\[8pt]
\mathbf{I}\dot{\boldsymbol{\omega}}
+
\boldsymbol{\omega}\times(\mathbf{I}\boldsymbol{\omega})
=
\boldsymbol{\tau}, \\[8pt]
\dot T = \dot{T}, \\[4pt]
\ddot T = u_T,
\end{cases}
\end{equation}
where $\hat{\mathbf{e}}_3=[0~0~1]^\top$, $\mathbf{r},\mathbf{v}\in\mathbb{R}^3$ denote inertial position and velocity, 
$\boldsymbol{\eta}=(\phi,\theta,\psi)$ are the 3-2-1 Euler angles, 
$\boldsymbol{\omega}=(p,q,r)$ is the body angular velocity, 
$T$ represents thrust deviation from hover, 
$\boldsymbol{\tau}=(\tau_\phi,\tau_\theta,\tau_\psi)^\top$ is the control torque, 
and $\mathbf{E}(\phi,\theta)$ is the Euler kinematic matrix.

Define the state and input as
\begin{equation}
\begin{aligned}
\mathbf{x}
&=
\begin{bmatrix}
\mathbf{r}^\top & \mathbf{v}^\top & \boldsymbol{\eta}^\top & \dot{\boldsymbol{\eta}}^\top & T & \dot{T}
\end{bmatrix}^\top
\in\mathbb{R}^{14},\\
\mathbf{u}
&=
\begin{bmatrix}
u_T & \mathbf{u}_\eta^\top
\end{bmatrix}^\top
\in\mathbb{R}^{4},
\end{aligned}
\end{equation}
where $\mathbf{u}_\eta \in \mathbb{R}^3$ denotes the angular acceleration input in Euler coordinates.

Using the relation $\boldsymbol{\omega}=\mathbf{E}^{-1}(\phi,\theta)\dot{\boldsymbol{\eta}}$, the dynamics can be expressed in control-affine form as
\begin{equation}\label{dyn0}
\dot{\mathbf{x}}=\mathbf{f}_0(\mathbf{x})+\mathbf{G}_0(\mathbf{x})\mathbf{u},\qquad \mathbf{x}\left(0\right)=\mathbf{x}_0,
\end{equation}
where $\mathbf{x}_0\in \mathbb{R}^{14}$ is the initial state, 
\begin{equation}
\mathbf{f}_0(\mathbf{x})=
\begin{bmatrix}
\mathbf{v} \\[4pt]
-\,g\hat{\mathbf{e}}_3+\dfrac{mg+T}{m}\mathbf{R}(\boldsymbol{\eta})\hat{\mathbf{e}}_3 \\[8pt]
\dot{\boldsymbol{\eta}} \\[4pt]
\mathbf{0}_3 \\[6pt]
\dot{T} \\[4pt]
0
\end{bmatrix}.
\end{equation}
\begin{equation}
\mathbf{G}_0\!\left(\mathbf{x}\right)=
\begin{bmatrix}
\mathbf{0}_{9\times 1}&\mathbf{0}_{9\times 3}\\
\mathbf{0}_{3\times 1} & \mathbf{I}_3\\
0 & \mathbf{0}_{1\times 3}\\
1 & \mathbf{0}_{1\times 3}
\end{bmatrix}.
\end{equation}

Given the nonlinear quadrotor dynamics in \eqref{orginaldyn}--\eqref{dyn0}, we aim to achieve safe and optimal hover by addressing the following two coupled problems:

\noindent\textbf{Problem 1 (Forward Invariance Guarantee):}
Let $\mathcal{X}_{\mathrm{adm}} \subset \mathbb{R}^{14}$ denote the admissible state set. 
Design a feedback law
\begin{equation}
\mathbf{u} = \mathbf{h}(\mathbf{x}, \mathbf{k}),
\end{equation}
such that $\mathcal{X}_{\mathrm{adm}}$ is forward invariant under the closed-loop dynamics, i.e.,
\begin{equation}
\mathbf{x}(0) \in \mathcal{X}_{\mathrm{adm}}
\;\Longrightarrow\;
\mathbf{x}(t) \in \mathcal{X}_{\mathrm{adm}}, \quad \forall t \ge 0.
\end{equation}
This is achieved by restricting the gain to an admissible set
\begin{equation}
\mathbf{k} \in \mathcal{K}_{\mathrm{adm}} \subset \mathbb{R}^{14},
\end{equation}
constructed such that the above invariance condition holds for all $\mathbf{k} \in \mathcal{K}_{\mathrm{adm}}$.

\noindent\textbf{Problem 2 (Safe Online Gain Scheduling):}
Under the feedback law $\mathbf{u}=\mathbf{h}(\mathbf{x},\mathbf{k})$, the discretized closed-loop dynamics can be written as
\begin{equation}\label{discretetime}
\mathbf{x}_{t+1}=\mathbf{\Phi}(\mathbf{x}_t,\mathbf{k}_t), \qquad t=0,1,2,\ldots,
\end{equation}
where $\mathbf{x}_t \in \mathcal{X}_{\mathrm{adm}}$, $\mathbf{k}_t \in \mathcal{K}_{\mathrm{adm}}$, and $\mathbf{x}_{t+1} \in \mathcal{X}_{\mathrm{adm}}$. The transition map $\mathbf{\Phi}$ is unknown and is accessed only through interaction with the system.

By discretizing the admissible gain set $\mathcal{K}_{\mathrm{adm}}$, we obtain a finite set 
$\bar{\mathcal{K}}_{\mathrm{adm}} \subset \mathcal{K}_{\mathrm{adm}}$ of stabilizing control gains. 
We then design an online learning policy
\begin{equation}
\mathbf{k}_t = \pi(\mathbf{x}_t),
\end{equation}
such that $\mathbf{k}_t \in \bar{\mathcal{K}}_{\mathrm{adm}}$ for all $t$, while optimizing closed-loop performance over time.

\section{Forward Invariant Control}\label{Quadcopter Model}

This section constructs the feedback law posed in Problem 1 and derives the control-friendly discrete-time dynamics used in Problem 2. The objective is to regulate the quadrotor to a desired hover condition while ensuring that the closed-loop state remains in the admissible set $\mathcal{X}_{\mathrm{adm}}$ for all time. 

By adopting the snap-based control framework of \cite{el2023quadcopter}, the quadrotor is regulated toward the hover equilibrium
\[
\mathbf{x}^*=
\begin{bmatrix}
\mathbf{r}_I^*\\
\mathbf{0}_{11\times 1}
\end{bmatrix},
\]
where $\mathbf{r}_I^* \in \mathbb{R}^3$ denotes the desired inertial-frame hover position.

To ensure a smooth and dynamically feasible transition from the initial position $\mathbf{r}_0$ to the target $\mathbf{r}_I^*$, we construct a reference trajectory of the form
\begin{equation}\label{rd}
\mathbf{r}_d(t)=
\begin{cases}
(1-\beta(t))\mathbf{r}_0+\beta(t)\mathbf{r}_I^*, & t\in[0,T_f],\\[4pt]
\mathbf{r}_I^*, & t\ge T_f,
\end{cases}
\end{equation}
where $\beta:[0,T_f]\to[0,1]$ is a quintic polynomial satisfying boundary conditions that ensure
\[
\mathbf{r}_d(t) \in C^4,
\]
i.e., continuity of position and its derivatives up to snap at $t=T_f$.

This construction guarantees a smooth transition that is consistent with the differential flatness and snap-based control requirements of the quadrotor dynamics, enabling accurate tracking and stable convergence to the desired hover equilibrium.

Define the tracking-error state
\[
\mathbf{z}(\mathbf{x})
=
\begin{bmatrix}
\mathbf{e}_r^\top &
\mathbf{e}_v^\top &
\mathbf{e}_a^\top &
\mathbf{e}_j^\top &
\psi &
\dot{\psi}
\end{bmatrix}^\top
\in\mathbb{R}^{14},
\]
where
\[
\mathbf{e}_r=\mathbf{r}-\mathbf{r}_d,~
\mathbf{e}_v=\mathbf{v}-\dot{\mathbf{r}}_d,~
\mathbf{e}_a=\mathbf{a}-\ddot{\mathbf{r}}_d,~
\mathbf{e}_j=\mathbf{j}-\dddot{\mathbf{r}}_d.
\]

Let the external input be
\[
\mathbf{s}
=
\begin{bmatrix}
\mathbf{s}_r\\
s_\psi
\end{bmatrix}
=
\begin{bmatrix}
\dot{\mathbf{j}}\\
\ddot{\psi}
\end{bmatrix}
\in\mathbb{R}^{4}.
\]
Then the external dynamics take the form
\begin{equation}
\dot{\mathbf{z}}
=
\mathbf{A}_{\mathrm{EXT}}\mathbf{z}
+
\mathbf{B}_{\mathrm{EXT}}\mathbf{s}
+
\mathbf{E}_{\mathrm{EXT}}\mathbf{r}_d^{(4)}(t),
\end{equation}
where
\begin{equation}
\mathbf{A}_{\mathrm{EXT}}=
\begin{bmatrix}
\mathbf{0}_{3\times 3} & \mathbf{I}_3 & \mathbf{0}_{3\times 3} & \mathbf{0}_{3\times 3} & \mathbf{0}_{3\times 1} & \mathbf{0}_{3\times 1}\\
\mathbf{0}_{3\times 3} & \mathbf{0}_{3\times 3} & \mathbf{I}_3 & \mathbf{0}_{3\times 3} & \mathbf{0}_{3\times 1} & \mathbf{0}_{3\times 1}\\
\mathbf{0}_{3\times 3} & \mathbf{0}_{3\times 3} & \mathbf{0}_{3\times 3} & \mathbf{I}_3 & \mathbf{0}_{3\times 1} & \mathbf{0}_{3\times 1}\\
\mathbf{0}_{3\times 3} & \mathbf{0}_{3\times 3} & \mathbf{0}_{3\times 3} & \mathbf{0}_{3\times 3} & \mathbf{0}_{3\times 1} & \mathbf{0}_{3\times 1}\\
\mathbf{0}_{1\times 3} & \mathbf{0}_{1\times 3} & \mathbf{0}_{1\times 3} & \mathbf{0}_{1\times 3} & 0 & 1\\
\mathbf{0}_{1\times 3} & \mathbf{0}_{1\times 3} & \mathbf{0}_{1\times 3} & \mathbf{0}_{1\times 3} & 0 & 0
\end{bmatrix},
\end{equation}
\begin{equation}
\mathbf{B}_{\mathrm{EXT}}=
\begin{bmatrix}
\mathbf{0}_{3\times 3} & \mathbf{0}_{3\times 1}\\
\mathbf{0}_{3\times 3} & \mathbf{0}_{3\times 1}\\
\mathbf{0}_{3\times 3} & \mathbf{0}_{3\times 1}\\
\mathbf{I}_3 & \mathbf{0}_{3\times 1}\\
\mathbf{0}_{1\times 3} & 0\\
\mathbf{0}_{1\times 3} & 1
\end{bmatrix},
\end{equation}
and
\begin{equation}
\mathbf{E}_{\mathrm{EXT}}=
\begin{bmatrix}
\mathbf{0}_{3\times 3}\\
\mathbf{0}_{3\times 3}\\
\mathbf{0}_{3\times 3}\\
-\mathbf{I}_3\\
\mathbf{0}_{1\times 3}\\
\mathbf{0}_{1\times 3}
\end{bmatrix}
\in \mathbb{R}^{14\times 3}.
\end{equation}

We choose the gain-parameterized feedback law
\begin{equation}\label{eq:s_feedback}
\resizebox{0.99\hsize}{!}{%
$
\mathbf{s}
=
\begin{bmatrix}
\mathbf{K}_j(\dddot{\mathbf{r}}_d-\mathbf{j})
+\mathbf{K}_a(\ddot{\mathbf{r}}_d-\mathbf{a})
+\mathbf{K}_v(\dot{\mathbf{r}}_d-\mathbf{v})
+\mathbf{K}_p(\mathbf{r}_d-\mathbf{r})\\[4pt]
-k_{13}\dot{\psi}-k_{14}\psi
\end{bmatrix},
$
}
\end{equation}
with diagonal gain matrices
\[
\begin{split}
\mathbf{K}_j=&\mathrm{diag}(k_1,k_2,k_3),\\
\mathbf{K}_a=&\mathrm{diag}(k_4,k_5,k_6), \\
\mathbf{K}_v=&\mathrm{diag}(k_7,k_8,k_9), \\
\mathbf{K}_p=&\mathrm{diag}(k_{10},k_{11},k_{12}),
\end{split}
\]
and yaw gains $k_{13},k_{14}>0$. Collect these gains in
\[
\mathbf{k}=
\begin{bmatrix}
k_1&\cdots&k_{14}
\end{bmatrix}^\top \in \mathbb{R}^{14}.
\]

\noindent \textbf{Closed-loop structure, stability, and Forward Invariance Guarantee.}
Substituting \eqref{eq:s_feedback} into the external dynamics yields a linear time-varying system in the tracking error coordinates:
\begin{equation}
\dot{\mathbf{z}}
=
\mathbf{A}_{\mathrm{cl}}(\mathbf{k})\,\mathbf{z}
+
\mathbf{E}_{\mathrm{EXT}}\mathbf{r}_d^{(4)}(t),
\end{equation}
where $\mathbf{A}_{\mathrm{cl}}(\mathbf{k})=\mathbf{A}_{\mathrm{EXT}}-\mathbf{B}_{\mathrm{EXT}}\mathbf{K}(\mathbf{k})$, and $\mathbf{K}(\mathbf{k})\in\mathbb{R}^{4\times 14}$ is the structured feedback gain matrix induced by \eqref{eq:s_feedback}, defined as
\begin{equation}
\mathbf{K}(\mathbf{k})=
\begin{bmatrix}
\mathbf{K}_p & \mathbf{K}_v & \mathbf{K}_a & \mathbf{K}_j & \mathbf{0}_{3\times 1}& \mathbf{0}_{3\times 1}\\[4pt]
\mathbf{0}_{1\times 3} &\mathbf{0}_{1\times 3} &\mathbf{0}_{1\times 3} &\mathbf{0}_{1\times 3} & k_{14} &k_{13}
\end{bmatrix}.
\end{equation}

\begin{theorem}[Forward invariance under gain-parameterized feedback]
\label{thm:forward_invariance}
Consider the closed-loop error dynamics
\begin{equation}
\dot{\mathbf z}
=
\mathbf A_{\mathrm{cl}}(\mathbf k)\mathbf z
+
\mathbf E_{\mathrm{EXT}}\mathbf r_d^{(4)}(t),
\end{equation}
where $\mathbf A_{\mathrm{cl}}(\mathbf k)$ is Hurwitz for a given $\mathbf k\in\mathcal K_{\mathrm{adm}}$, and $\mathbf r_d^{(4)}(t)$ is bounded with
\[
\bar r := \|\mathbf r_d^{(4)}\|_\infty < \infty.
\]
Suppose there exist matrices $\mathbf P\succ 0$ and $\mathbf Q\succ 0$ such that
\begin{equation}
\mathbf A_{\mathrm{cl}}^\top(\mathbf k)\mathbf P
+
\mathbf P\mathbf A_{\mathrm{cl}}(\mathbf k)
\le
-\mathbf Q.
\end{equation}
Define
\[
V(\mathbf z)=\mathbf z^\top \mathbf P \mathbf z,
\]
\[
\mathcal Z_\rho
=
\left\{
\mathbf z\in\mathbb R^{14}\;\middle|\; V(\mathbf z)\le \rho
\right\}.
\]
Then, there exists $\rho^\star>0$ such that, for every $\rho\ge \rho^\star$, the set $\mathcal Z_\rho$ is forward invariant. Moreover, the origin is uniformly ultimately bounded, with ultimate bound proportional to $\bar r$.
\end{theorem}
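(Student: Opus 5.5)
The plan is a standard Lyapunov argument with a bounded input, carried out on the quadratic function $V(\mathbf z)=\mathbf z^\top\mathbf P\mathbf z$ of Theorem~\ref{thm:forward_invariance}. Along trajectories of the closed-loop error dynamics, the Lyapunov inequality gives
\[
\dot V=\mathbf z^\top\big(\mathbf A_{\mathrm{cl}}^\top(\mathbf k)\mathbf P+\mathbf P\mathbf A_{\mathrm{cl}}(\mathbf k)\big)\mathbf z+2\mathbf z^\top\mathbf P\mathbf E_{\mathrm{EXT}}\mathbf r_d^{(4)}(t)
\le -\lambda_{\min}(\mathbf Q)\|\mathbf z\|^2+2\|\mathbf P\mathbf E_{\mathrm{EXT}}\|\,\bar r\,\|\mathbf z\|.
\]
Since $\mathbf E_{\mathrm{EXT}}$ has orthonormal columns, I can replace $\|\mathbf P\mathbf E_{\mathrm{EXT}}\|$ by the cruder bound $\|\mathbf P\|=\lambda_{\max}(\mathbf P)$. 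Setting $q=\lambda_{\min}(\mathbf Q)$ and $c=2\lambda_{\max}(\mathbf P)\bar r$, this reads $\dot V\le -\|\mathbf z\|(q\|\mathbf z\|-c)$. Hence $\dot V<0$ whenever $\|\mathbf z\|>\mu:=c/q=2\lambda_{\max}(\mathbf P)\bar r/\lambda_{\min}(\mathbf Q)$.

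Next I convert this norm condition into a level-set condition using $\lambda_{\min}(\mathbf P)\|\mathbf z\|^2\le V(\mathbf z)\le\lambda_{\max}(\mathbf P)\|\mathbf z\|^2$. I define $\rho^\star:=\lambda_{\max}(\mathbf P)\mu^2$, or any small positive number if $\bar r=0$, since the theorem asks for $\rho^\star>0$. Take $\rho\ge\rho^\star$ and any point on the boundary $V(\mathbf z)=\rho$. There $\|\mathbf z\|^2\ge\rho/\lambda_{\max}(\mathbf P)\ge\mu^2$, so $\dot V\le0$. The inequality is strict when $\rho>\rho^\star$, and at $\rho=\rho^\star$ with $\bar r=0$ only the origin gives equality.

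Forward invariance of the compact set $\mathcal Z_\rho$ then follows by the usual contradiction argument. Suppose a trajectory starting in $\mathcal Z_\rho$ leaves it. By continuity there is a last time $t_1$ with $V=\rho$, followed by times with $V>\rho$. Just after $t_1$ the trajectory satisfies $V>\rho\ge\rho^\star$, hence $\|\mathbf z\|>\mu$ and $\dot V<0$ on that interval. Integrating gives $V<\rho$ there, a contradiction. Alternatively, one can invoke Nagumo's theorem, or the comparison lemma for the non-strict boundary case. Because $\mathbf A_{\mathrm{cl}}$ is constant for fixed $\mathbf k$ and $\mathbf r_d^{(4)}$ is bounded and piecewise continuous, solutions exist globally and are absolutely continuous, so this argument is legitimate.

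For uniform ultimate boundedness I apply the same estimate outside $\mathcal Z_{\rho^\star}$. To get a uniform decay rate, I split the quadratic term, for instance as $\dot V\le-\tfrac q2\|\mathbf z\|^2$ whenever $\|\mathbf z\|\ge2\mu$. This yields $\dot V\le-\tfrac{q}{2\lambda_{\max}(\mathbf P)}V$ outside $\mathcal Z_{4\rho^\star}$. Any trajectory therefore enters $\mathcal Z_{4\rho^\star}$ in finite time, with a bound that depends only on $V(\mathbf z(0))$ and not on the initial time, and by the invariance already proved it stays there. The ultimate bound is $\|\mathbf z\|\le\sqrt{4\rho^\star/\lambda_{\min}(\mathbf P)}=2\sqrt{\lambda_{\max}(\mathbf P)/\lambda_{\min}(\mathbf P)}\,\mu$, which is linear in $\bar r$ as claimed.

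I expect no substantive obstacle. The main care point is the boundary case $\rho=\rho^\star$, where $\dot V\le0$ is only non-strict; there I would rely on the comparison lemma rather than strict decrease.
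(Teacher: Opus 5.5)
Your proof is correct and follows essentially the same route as the paper. You use the same bound $\dot V\le-\lambda_{\min}(\mathbf Q)\|\mathbf z\|^2+2\|\mathbf P\mathbf E_{\mathrm{EXT}}\|\,\bar r\,\|\mathbf z\|$, the same level $\rho^\star=\lambda_{\max}(\mathbf P)R^2$ from the boundary estimate, and reach the same ultimate bound linear in $\bar r$; replacing $\|\mathbf P\mathbf E_{\mathrm{EXT}}\|$ by $\lambda_{\max}(\mathbf P)$ only enlarges $\rho^\star$. Your additions tighten the paper's sketch without changing its structure: the last-exit-time argument (which already covers $\rho=\rho^\star$, since it only uses $\dot V<0$ where $V>\rho^\star$, so the closing concern is unnecessary), the decay step outside $\mathcal Z_{4\rho^\star}$ giving a uniform entry time, and the fix for $\bar r=0$, where the paper's $\rho^\star$ would be zero.
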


\begin{proof}
Along trajectories,
\begin{align}
\dot V(\mathbf z)
&=
\mathbf z^\top
\left(
\mathbf A_{\mathrm{cl}}^\top \mathbf P + \mathbf P \mathbf A_{\mathrm{cl}}
\right)\mathbf z
+
2\mathbf z^\top \mathbf P \mathbf E_{\mathrm{EXT}}\mathbf r_d^{(4)}(t) \\
&\le
-\lambda_{\min}(\mathbf Q)\|\mathbf z\|^2
+
2\|\mathbf P\mathbf E_{\mathrm{EXT}}\|\,\|\mathbf z\|\,\bar r,
\end{align}
where we used $\mathbf A_{\mathrm{cl}}^\top \mathbf P + \mathbf P \mathbf A_{\mathrm{cl}} \le -\mathbf Q$,
the bound $\mathbf z^\top \mathbf Q \mathbf z \ge \lambda_{\min}(\mathbf Q)\|\mathbf z\|^2$,
and the inequality
\[|\mathbf z^\top \mathbf P \mathbf E_{\mathrm{EXT}} \mathbf r|
\le \|\mathbf z\| \|\mathbf P \mathbf E_{\mathrm{EXT}}\| \|\mathbf r\|\]
together with $\|\mathbf r_d^{(4)}(t)\|\le \bar r$.
Let
\[
\alpha := \lambda_{\min}(\mathbf Q),
\]
\[
c := 2\|\mathbf P\mathbf E_{\mathrm{EXT}}\|,
\]
\[
R := \frac{c}{\alpha}\bar r.
\]
Then,
\[
\dot V(\mathbf z) < 0,
\qquad \forall\, \mathbf z \in \mathbb{R}^{14} \text{ with } \|\mathbf z\| > R.
\]
Now, if $\mathbf z\in \partial \mathcal Z_\rho$, where $\partial \mathcal Z_\rho := \{\mathbf z \in \mathbb{R}^{14} \mid V(\mathbf z)=\rho\}$ denotes the boundary of the sublevel set. Then
\[
\rho = \mathbf z^\top \mathbf P \mathbf z \le \lambda_{\max}(\mathbf P)\|\mathbf z\|^2,
\]
so
\[
\|\mathbf z\| \ge \sqrt{\frac{\rho}{\lambda_{\max}(\mathbf P)}}.
\]
Hence, if
\[
\rho^\star := \lambda_{\max}(\mathbf P)R^2,
\]
then for every $\rho\ge \rho^\star$ and every $\mathbf z\in\partial \mathcal Z_\rho$,
\[
\|\mathbf z\| \ge R,
\]
which implies $\dot V(\mathbf z)\le 0$ on $\partial \mathcal Z_\rho$. Therefore, $\mathcal Z_\rho$ is forward invariant for all $\rho\ge \rho^\star$.

Since $\dot V(\mathbf z)<0$ outside the ball $\{\mathbf z:\|\mathbf z\|\le R\}$, every trajectory enters and remains in a compact neighborhood of the origin. Thus, $\mathbf z(t)$ is uniformly ultimately bounded, and the ultimate bound scales linearly with $\bar r=\|\mathbf r_d^{(4)}\|_\infty$.
\end{proof}

\noindent \textbf{Invariant set construction.}
Let the admissible state set be defined as
\[
\mathcal{X}_{\mathrm{adm}}
=
\{\mathbf{x} \mid \mathbf{z}(\mathbf{x}) \in \mathcal{Z}_{\mathrm{adm}}\},
\]
where $\mathcal{Z}_{\mathrm{adm}}$ is a compact set of the form
\[
\mathcal{Z}_{\mathrm{adm}}=
\{\mathbf{z}\in\mathbb{R}^{14} \mid V(\mathbf{z}) \le \rho\},
\]
with $\rho>0$ chosen such that all corresponding physical constraints (e.g., position bounds, velocity limits, thrust feasibility, and attitude constraints) are satisfied.

By standard Lyapunov arguments, $\mathcal{Z}_{\mathrm{adm}}$ is forward invariant if
\[
\dot V(\mathbf{z}) \le 0
\quad \text{on} \quad
\partial \mathcal{Z}_{\mathrm{adm}}.
\]
This condition is guaranteed provided that the gains $\mathbf{k}$ are selected such that
\[
\lambda_{\min}(\mathbf{Q})\rho
\ge
c^2 \|\mathbf{r}_d^{(4)}\|_\infty^2.
\]

\noindent \textbf{Admissible gain set.}
We therefore define the admissible gain set
\[
\mathcal{K}_{\mathrm{adm}}
=
\left\{
\mathbf{k} \in \mathbb{R}^{14}
\;\middle|\;
\begin{array}{l}
\mathbf{A}_{\mathrm{cl}}(\mathbf{k}) \text{ is Hurwitz},\\
\mathcal{Z}_{\mathrm{adm}} \text{ is forward invariant}
\end{array}
\right\}.
\]
By construction, for every $\mathbf{k}\in\mathcal{K}_{\mathrm{adm}}$, the closed-loop system satisfies
\[
\mathbf{x}(0)\in\mathcal{X}_{\mathrm{adm}}
\;\Longrightarrow\;
\mathbf{x}(t)\in\mathcal{X}_{\mathrm{adm}}, \qquad \forall t\ge 0.
\]
\noindent\textbf{Relation between external input and virtual input.}
Define
\[
\hat{\mathbf{k}}_b=\mathbf{R}(\boldsymbol{\eta})\hat{\mathbf{e}}_3
\]
as the thrust direction. By differentiating the translational dynamics twice and using the kinematic relations, the external input can be written as
\begin{equation}
\mathbf{s}=\mathbf{M}(\mathbf{x})\mathbf{u}+\mathbf{n}(\mathbf{x}),
\label{su}
\end{equation}
where $\mathbf{M}(\mathbf{x})\in\mathbb{R}^{4\times 4}$ and $\mathbf{n}(\mathbf{x})\in\mathbb{R}^{4}$ are state-dependent terms induced by the quadrotor dynamics. Hence,
\begin{equation}
\mathbf{u}
=
\mathbf{M}^{-1}(\mathbf{x})\bigl(\mathbf{s}-\mathbf{n}(\mathbf{x})\bigr).
\end{equation}
Combining this with \eqref{eq:s_feedback} yields the feedback law
\begin{equation}
\mathbf{u}=\mathbf{h}(\mathbf{x},\mathbf{k}),
\end{equation}
which is exactly the controller introduced in Problem 1.

\noindent \textbf{Control-Friendly Discrete-Time Dynamics.} 
Since the external input depends linearly on the gain vector, \eqref{eq:s_feedback} can be written compactly as
\begin{equation}
\mathbf{s}=\mathbf{H}(\mathbf{x})\mathbf{k},
\end{equation}
where $\mathbf{H}:\mathbb{R}^{14}\to\mathbb{R}^{4\times 14}$. Substituting into \eqref{su} gives
\begin{equation}
\mathbf{u}
=
\mathbf{M}^{-1}(\mathbf{x})\bigl(\mathbf{H}(\mathbf{x})\mathbf{k}-\mathbf{n}(\mathbf{x})\bigr).
\end{equation}
Therefore, the closed-loop continuous-time dynamics can be written as
\begin{equation}
\dot{\mathbf{x}}
=
\mathbf{f}(\mathbf{x})
+
\mathbf{G}(\mathbf{x})\mathbf{k},
\label{eq:control_affine}
\end{equation}
where
\begin{equation}
    \mathbf{f}(\mathbf{x})
=
\mathbf{f}_0(\mathbf{x})
-
\mathbf{G}_0(\mathbf{x})\mathbf{M}^{-1}(\mathbf{x})\mathbf{n}(\mathbf{x}),
\end{equation}
\begin{equation}
    \mathbf{G}(\mathbf{x})
=
\mathbf{G}_0(\mathbf{x})\mathbf{M}^{-1}(\mathbf{x})\mathbf{H}(\mathbf{x}).
\end{equation}
Under zero-order hold with sampling time $\Delta t$, the closed-loop dynamics admit the discrete-time representation, given by Eq. \eqref{discretetime}, 
where $\Phi$ is obtained via fourth-order Runge--Kutta integration of \eqref{eq:control_affine}. This discrete-time model serves as the basis for Problem~2 and for constructing the reinforcement learning environment.

\section{Reinforcement Learning Formulation}\label{Safe Learning}

The gain-scheduled closed-loop system is modeled as a discrete-time controlled system \eqref{discretetime}
where $\mathbf{x}_t \in \mathcal{X}_{\mathrm{adm}}$ and $\mathbf{k}_t \in \bar{\mathcal{K}}_{\mathrm{adm}}\subset {\mathcal{K}}_{\mathrm{adm}}$. The transition map $\Phi$ is unknown and is accessed through interaction with the system. The admissible gain table specifies a family of stabilizing controllers, while the bounds $k^{i}_{\min}$ and $k^{i}_{\max}$ reported in Table~\ref{tab:gain_bounds} represent, respectively, the component-wise minimum and maximum values across that table. At each time step, the learning agent selects a gain index corresponding to $\mathbf{k}_t \in \bar{\mathcal{K}}_{\mathrm{adm}}$. To prevent arbitrarily fast switching between controllers, a dwell-time constraint is enforced such that each selected gain is held constant for a fixed number of time steps.

\subsection{Reward Function}

The instantaneous reward is defined as
\begin{equation}
\begin{aligned}
r_t = -\Big(
& w_r \|\mathbf{e}_r(t)\|^2
+ w_v \|\mathbf{e}_v(t)\|^2 \\
& + w_\eta \|\boldsymbol{\eta}(t)\|^2
+ w_\omega \|\boldsymbol{\omega}(t)\|^2
\Big) \\
& - w_u \left( \ddot{T}(t)^2 + \|\boldsymbol{\tau}(t)\|^2 \right)
- w_s \,\mathbb{I}(\mathbf{k}_t \neq \mathbf{k}_{t-1}),
\end{aligned}
\end{equation}
where $\boldsymbol{\tau}(t)$ denotes the control torque and $\mathbb{I}(\cdot)$ is the indicator function.

Additionally, a large terminal penalty is imposed if safety constraints are violated, e.g.,
\begin{equation}
r_t \leftarrow r_t - \rho_{\mathrm{fail}}, \qquad \rho_{\mathrm{fail}} \gg 1,
\end{equation}
whenever the state exits admissible bounds or numerical instability is detected.

This reward formulation penalizes tracking error, attitude deviation, angular velocity, control effort, and excessive switching, while strongly discouraging unsafe trajectories.

\subsection{Online Q-Learning over Gain Library}

Let $\bar{\mathcal{K}}_{\mathrm{adm}} = \{\mathbf{k}^{(1)},\dots,\mathbf{k}^{(N)}\}$ denote the finite gain set. The problem reduces to selecting one gain at each time step.

Define the action-value function
\begin{equation}
Q(\mathbf{x},\mathbf{k}) = \mathbb{E}\left[\sum_{t=0}^{\infty} \gamma^t r_t \,\bigg|\, \mathbf{x}_0=\mathbf{x}, \mathbf{k}_0=\mathbf{k} \right].
\end{equation}

The gain selection policy is implemented using an $\varepsilon$-greedy strategy:
\begin{equation}
\mathbf{k}_t =
\begin{cases}
\text{random element of } \bar{\mathcal{K}}_{\mathrm{adm}}, & \text{with probability } \varepsilon, \\
\displaystyle \arg\max_{\mathbf{k} \in \bar{\mathcal{K}}_{\mathrm{adm}}} Q_\theta(\mathbf{x}_t,\mathbf{k}), & \text{otherwise}.
\end{cases}
\end{equation}

The action-value function is approximated by a neural network and updated online via temporal-difference learning:
\begin{equation}
\mathcal{L}(\theta)
=
\left(
Q_\theta(\mathbf{x}_t,\mathbf{k}_t)
-
\big[
r_t + \gamma (1-d_t)\max_{\mathbf{k}'} Q_\theta(\mathbf{x}_{t+1},\mathbf{k}')
\big]
\right)^2,
\end{equation}
where $\gamma \in (0,1)$ is the discount factor and $d_t$ indicates episode termination.

\begin{table*}[t]
\centering
\caption{Admissible gain bounds, $\bar{\mathcal{K}}_{\mathrm{adm}}$ derived from the stabilizing gain table used by the online training environment.}
\label{tab:gain_bounds}
\renewcommand{\arraystretch}{1.15}
\resizebox{\textwidth}{!}{%
\begin{tabular}{lcccccccccccccc}
\toprule
 & $k^{(1)}$ & $k^{(2)}$ & $k^{(3)}$ & $k^{(4)}$ & $k^{(5)}$ & $k^{(6)}$ & $k^{(7)}$ & $k^{(8)}$ & $k^{(9)}$ & $k^{(10)}$ & $k^{(11)}$ & $k^{(12)}$ & $k^{(13)}$ & $k^{(14)}$ \\
\midrule
$k^{i}_{\min}$ & 9.8304 & 24.1920 & 49.1520 & 25.6000 & 47.6160 & 78.8480 & 22.4000 & 32.9600 & 45.4400 & 8.0000 & 9.6000 & 11.2000 & 12.0000 & 8.0000 \\
$k^{i}_{\max}$ & 49.7664 & 122.4720 & 248.8320 & 86.4000 & 160.7040 & 266.1120 & 50.4000 & 74.1600 & 102.2400 & 12.0000 & 14.4000 & 16.8000 & 32.0000 & 12.0000 \\
\bottomrule
\end{tabular}%
}
\end{table*}

\section{Simulation Results}\label{Results}

We evaluate the proposed DQN-based gain-scheduling framework in a high-fidelity nonlinear quadcopter simulation using a ZYX Euler-angle attitude representation and thrust/torque actuation. 
The quadcopter parameters are chosen as $m=1.5$ kg, $g=9.81$ m/s$^2$, and $\mathbf{I}=\mathrm{diag}(0.02,\,0.02,\,0.04)$ kg\,m$^2$. Numerical integration is carried out with a fixed step size of $\Delta t=0.01$ s, and each episode spans $10$ s. The desired trajectory is generated a smooth quintic time-scaling over $t\in[0,T_f]$ with $T_f=5$ s. For $t>T_f$, the desired position is held at $\mathbf{r}_d(T_f)$, while the desired velocity, acceleration, jerk, and snap are all set to zero.

\begin{figure}[!h]
\centering
\includegraphics[width=0.99\linewidth]{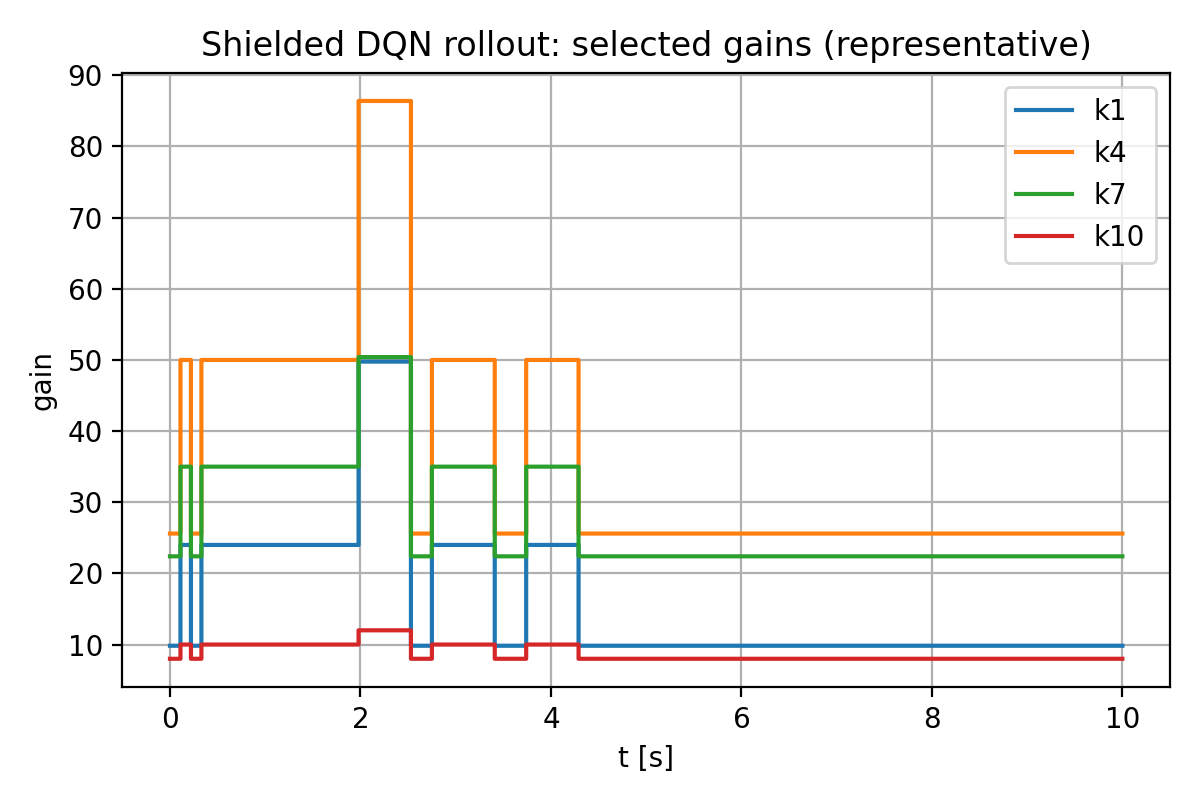}
\vspace{-1cm}
\caption{Shielded DQN rollout: selected translational feedback gains shared across axes. The policy increases gains during the initial transient and reduces them as tracking errors diminish.}
\label{fig:dqn_gains}
\end{figure}

\begin{figure}[!h]
\centering
\includegraphics[width=0.99\linewidth]{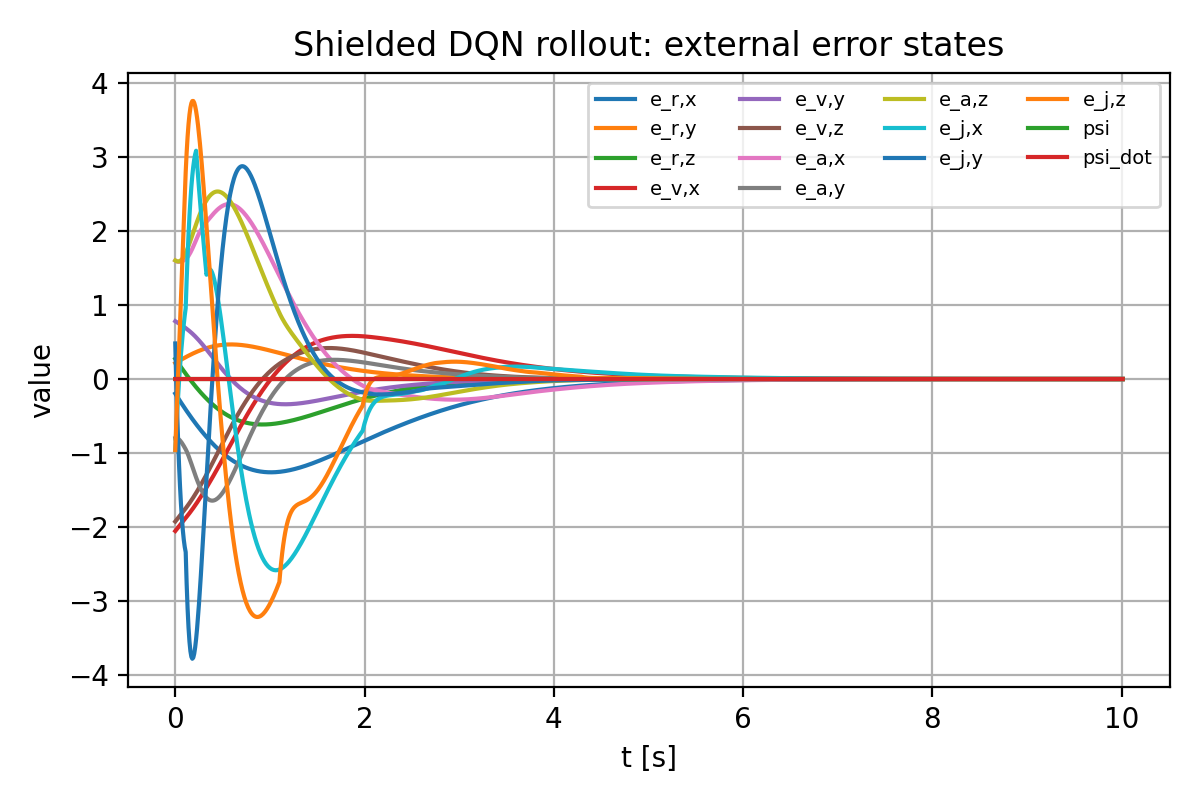}
\caption{Shielded DQN rollout: external error states. Translational error components (position, velocity, acceleration, and jerk) converge toward the origin while the yaw channel remains regulated, demonstrating stable closed-loop behavior under learned gain scheduling.}
\label{fig:dqn_states}
\end{figure}

The DQN observation is obtained by augmenting the 14-dimensional physical state with the phase variable $\min(t/T_f,1)$, resulting in a 15-dimensional input vector. At each decision step, the agent selects a gain vector from a finite library of pre-certified stabilizing gains. To reduce the policy dimension while exploiting the symmetry of the translational dynamics, the gains are tied across the $x$-, $y$-, and $z$-channels at each derivative level, whereas the yaw gains are scheduled separately. A dwell-time constraint is also enforced so that each selected gain remains active for a prescribed number of steps, thereby preventing overly rapid switching.

\begin{figure}[!h]
\centering
\includegraphics[width=0.95\linewidth]{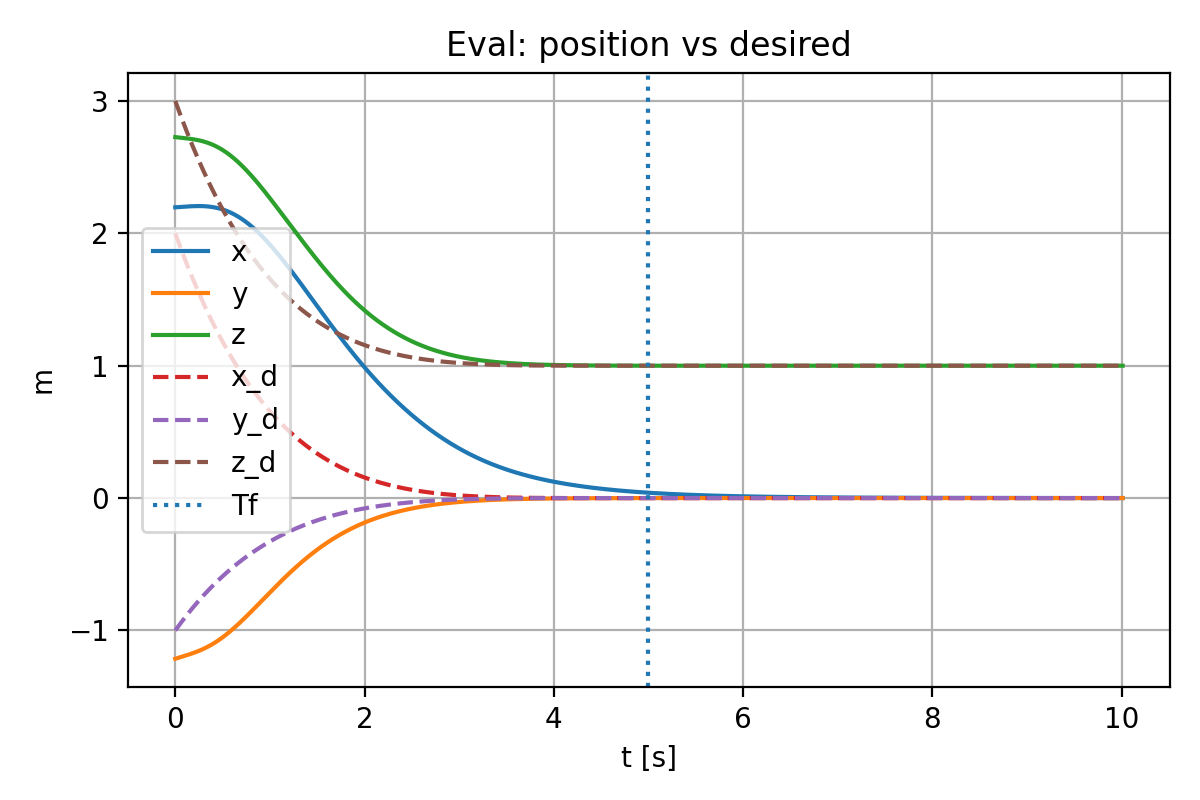}
\caption{Physical evaluation: inertial position versus desired position. The dotted line marks $T_f$; for $t>T_f$ the reference is held at $\mathbf{r}_d(T_f)$ and the quadcopter settles to hover.}
\label{fig:pos_vs_des}
\end{figure}

\begin{figure}[!h]
\centering
\includegraphics[width=0.99\linewidth]{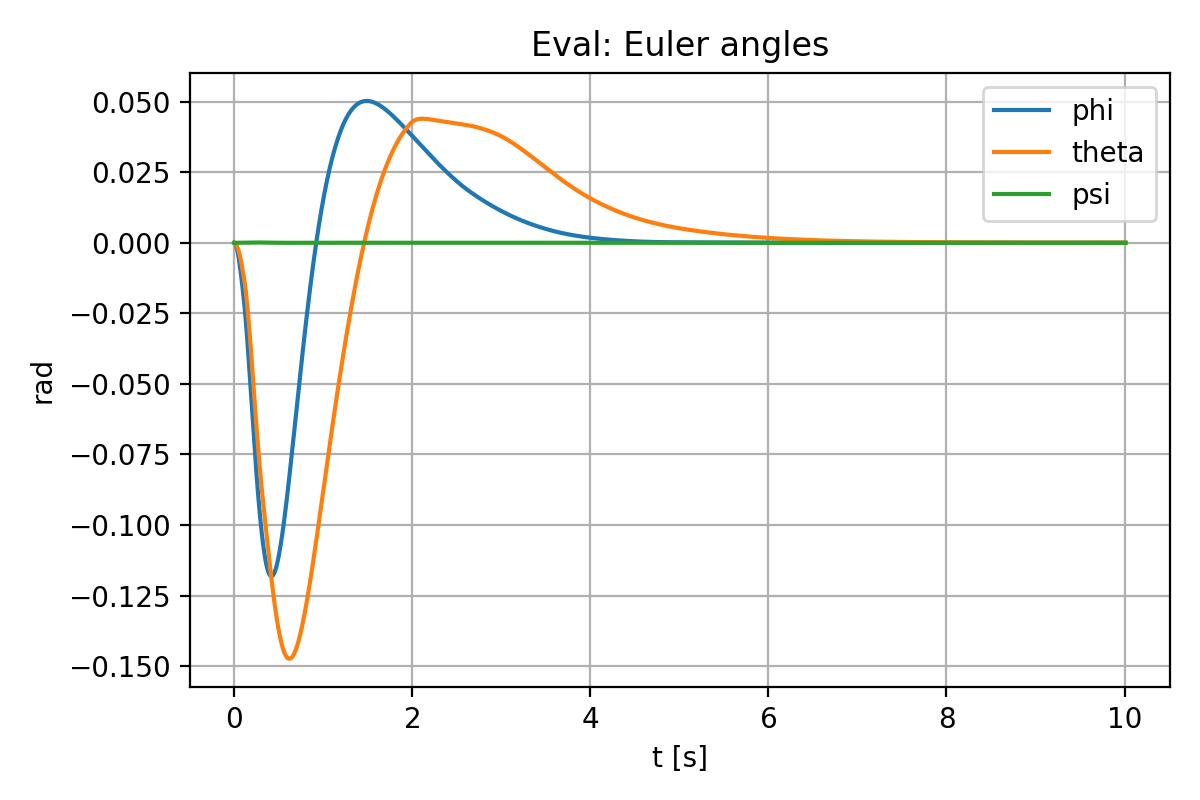}
\vspace{-1cm}
\caption{Physical evaluation: Euler angles $(\phi,\theta,\psi)$. Attitude excursions remain small and decay to near zero as tracking converges.}
\label{fig:euler}
\end{figure}

Figure~\ref{fig:dqn_gains} shows the translational feedback gains selected by the learned policy during a representative evaluation of shielded DQN rollout. Initially, the policy applies relatively aggressive gains to attenuate the large tracking errors. As the quadcopter approaches the desired trajectory and transitions into the post-$T_f$ hover regime, the controller progressively shifts to a milder gain configuration. This trend aligns with the purpose of online gain scheduling: strong corrective action during the transient phase and reduced aggressiveness near steady-state operation.

The corresponding external error-state trajectories are presented in Fig.~\ref{fig:dqn_states}. The position, velocity, acceleration, and jerk errors along all three translational axes decrease rapidly toward zero, while the yaw-related states remain bounded and well behaved. These results demonstrate stable closed-loop behavior under the learned scheduling policy and indicate that adaptive gain selection improves transient tracking performance without generating undesirable oscillatory responses.

\begin{figure}[h!]
\centering
\includegraphics[width=0.99\linewidth]{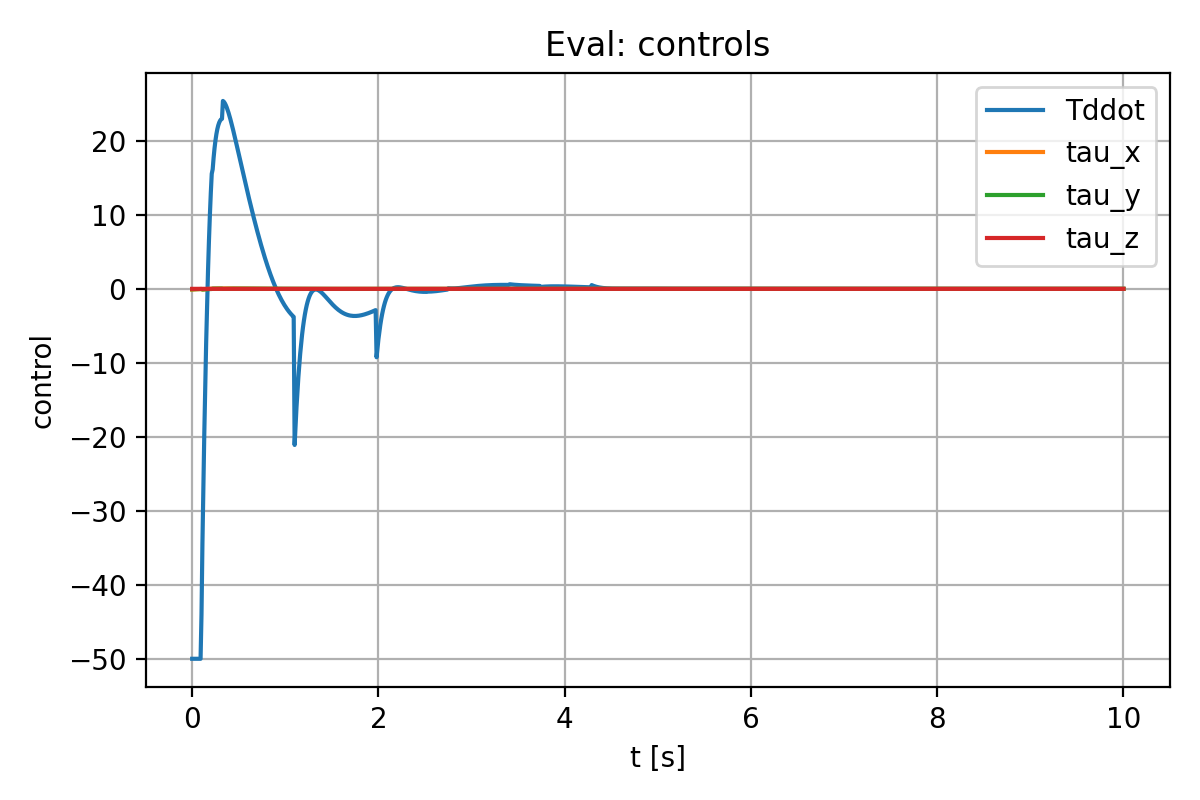}
\vspace{-1cm}
\caption{Physical evaluation: control inputs. The thrust second-derivative command $\ddot T$ and body torques $\boldsymbol{\tau}$ are largest during the initial transient and decrease as the state approaches the reference.}
\label{fig:controls}
\end{figure}

\begin{figure}[!h]
\centering
\includegraphics[width=0.99\linewidth]{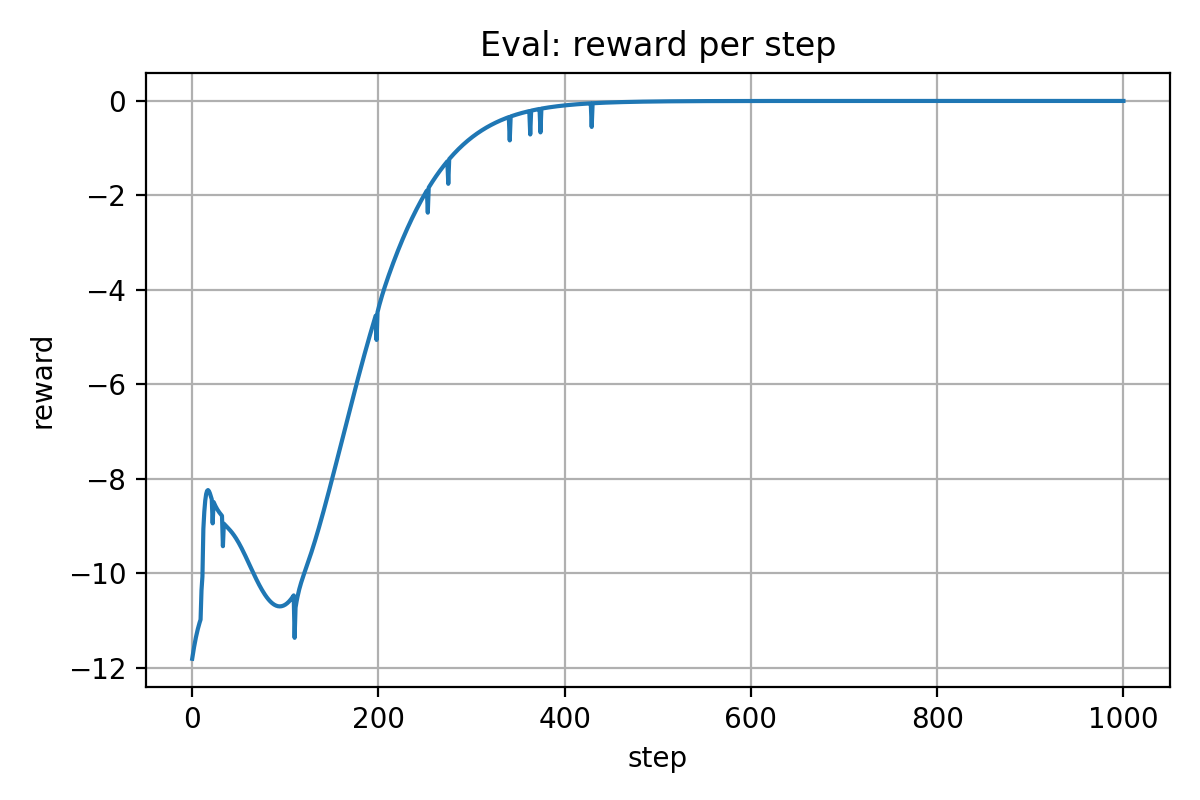}
\vspace{-1cm}
\caption{Physical evaluation: reward per step. The reward improves toward zero as tracking errors and control effort diminish over the episode.}
\label{fig:reward}
\end{figure}

To relate the gain evolution to the system motion, Fig.~\ref{fig:pos_vs_des} compares the quadcopter position components with the desired trajectory. The vehicle follows the quintic transition closely and settles smoothly to the final hover condition after $t=T_f$. Figure~\ref{fig:euler} shows the Euler-angle responses, indicating that the attitude deviations remain moderate throughout the maneuver and decay toward zero as the translational errors vanish. Hence, the improved trajectory-tracking performance is achieved without large rotational excursions.

Figure~\ref{fig:controls} depicts the control inputs, including the thrust second-derivative command $\ddot{T}$ and the body torques $\tau=[\tau_x,\tau_y,\tau_z]^\top$. As expected, the control effort is largest during the initial correction phase, when the tracking errors are greatest, and then diminishes as the vehicle converges to the desired hover equilibrium. Finally, Fig.~\ref{fig:reward} presents the per-step reward, which rises from strongly negative values during the early transient to values near zero as the state converges and the control effort decreases. Collectively, Figs.~1--6 show that the proposed gain-scheduled DQN achieves accurate trajectory tracking, limited attitude excursions, reduced control effort near convergence, and smooth transition to hover, while preserving the structured and stability-oriented nature of the underlying controller.

\section{Conclusion}\label{Conclusion}
The proposed framework yields a hybrid learning--control architecture in which reinforcement learning operates at a supervisory level over a structured family of stabilizing controllers. The control input is generated via model-based inversion, while learning selects gains from $\bar{\mathcal{K}}_{\mathrm{adm}}$. By restricting the action space to admissible gains, forward invariance of $\mathcal{X}_{\mathrm{adm}}$ is preserved throughout learning, ensuring safety by design. The admissible gain set $\mathcal{K}_{\mathrm{adm}}$ guarantees that all closed-loop trajectories remain within prescribed bounds, even during exploration, thereby decoupling safety from online optimization.

Within this safe envelope, the learning algorithm performs online gain scheduling to improve transient performance during the maneuver toward hover. This establishes a clear separation between safety and performance: safety is enforced by the controller structure, while learning optimizes tracking accuracy, control effort, and switching behavior over a finite set of physically meaningful controllers. The resulting formulation provides a control-oriented approach to reinforcement learning for nonlinear systems, improving interpretability and robustness. Future work will focus on extending the framework to uncertain and time-varying environments and scaling it to multi-agent systems under shared safety constraints.


\bibliography{ifacconf}

\end{document}